\newcommand{\Pbb}{\ensuremath{\mathbb{P}}}
\newcommand{\Qbb}{\ensuremath{\mathbb{Q}}}
\newcommand{\E}{\ensuremath{\mathbb{E}}}
\newcommand{\Abb}{\ensuremath{\mathbb{A}}}
\newtheorem{lemma}{Lemma}
\newcommand\be{$$}
\newcommand\ee{$$}
\newcommand\ben{\begin{equation}}
\newcommand\een{\end{equation}}
\newcommand\bea{\begin{eqnarray*}}
\newcommand\eea{\end{eqnarray*}}
\newcommand\bean{\begin{eqnarray}}
\newcommand\eean{\end{eqnarray}}
\begin{document}
\author{ Chris Kenyon\footnote{Contact: chris.kenyon@lloydsbanking.com}, \ Andrew Green\footnote{Contact: greandrew@googlemail.com}\ \ and \ Mourad Berrahoui\footnote{Contact: mourad.berrahoui@lloydsbanking.com}}
\title{Which measure for PFE?\\The Risk Appetite Measure \Abb\footnote{\bf The views expressed are those of the authors only, no other representation should be attributed.   Not guaranteed fit for any purpose.  Use at your own risk.}}
\date{15 December 2015\vskip5mm Version 1.01}

\maketitle

\begin{abstract}
Potential Future Exposure (PFE) is a standard risk metric for managing business unit counterparty credit risk but there is debate on how it should be calculated.  The debate has been whether to use one of many historical (``physical'') measures (one per calibration setup), or one of many risk-neutral measures (one per numeraire).  However, we argue that limits should be based on the bank's own risk appetite provided that this is consistent with regulatory backtesting and that whichever measure is used it should behave (in a sense made precise) like a historical measure.  Backtesting is only required by regulators for banks with IMM approval but we expect that similar methods are part of limit maintenance generally.  We provide three methods for computing the bank price of risk from readily available business unit data, i.e.\ business unit budgets (rate of return) and limits (e.g.\ exposure percentiles).  Hence we define and propose a Risk Appetite Measure, \Abb, for PFE and suggest that this is uniquely consistent with the bank's Risk Appetite Framework as required by sound governance.
\end{abstract}

\section{Introduction}
Potential Future Exposure (PFE) is a standard risk metric for managing business unit counterparty credit risk \cite{canabarro2003measuring,Pykhtin2007modelling} however there is debate on how it should be calculated \cite{Stein2013a}, as well as regulatory constraints \cite{BCBS-185}.  Whilst backtesting is only required by regulators for banks with IMM approval, we expect that similar methods are part of limit maintenance generally.  The debate has been whether to use one of many historical measures (one per calibration setup), or one of many risk-neutral measures (one per numeraire).  However, we argue that for PFE: 
\begin{itemize}
	\item whatever measure is chosen it should behave (in a sense to be made precise) like a historical measure;
	\item the risk appetite of the bank should be consistent with regulatory constraints;
	\item open risk should be viewed according to the risk appetite of the bank, i.e.\ the bank's price of risk.
\end{itemize}
Given these three constraints we identify a unique new measure, the Risk Appetite Measure, RAM, or \Abb\ for computing PFE.  We demonstrate that this measure can be computed simply from existing bank data and provide examples.

As in \cite{canabarro2003measuring} we define PFE as a quantile $q$ of future exposure, that is, as a VaR measure.  Since we are considering exposure as a control metric this is floored at zero.  It is possible that PFE as VaR may be replaced by PFE as Expected Shortfall (ES) following \cite{BCBS-265,BCBS-325}.  Substituting ES for VaR makes no difference to the development here as both VaR and ES can be backtested \cite{Acerbi2014backtesting} which is sufficient for our purposes.

Open risk, and hence limits, is a widespread feature of banking and one motivation for capital.  Pricing open risk, i.e.\ warehoused risk, is starting to attract attention in mathematical finance \cite{Hull2014joint,Kenyon2015warehousing} and has a long history in portfolio construction \cite{Markowitz1952a} and investment evaluation \cite{sharpe1964capital}.  Valuation adjustments on prices for credit have a long history \cite{Green2015xva} but it is only recently that capital has been incorporated \cite{Green2014kva}.   This paper  adds to the literature linking investment valuation and mathematical finance by proposing that limits be computed under the bank's risk appetite measure provided as part of sound, and self-consistent, governance.

The contribution of this paper is to develop and propose a method of computing PFE that is consistent with a bank's Risk Appetite Framework (RAF) \cite{BCBS-328,FSB2013appetite} using the Risk Appetite Measure \Abb, defined here. This method avoids the issues with \Pbb-measure PFE computation (choice of calibration), and avoids issues with \Qbb-measure PFE computation such as the implicit choice of a price of risk of zero.  This approach can be used whenever there is open risk, both for pricing as well as for limits.  Arguably, computing prices and limits for open positions under \Abb\ is the unique  method that is consistent with the bank's RAF.

\section{Risk Appetite Measure, \Abb}

Here we consider each of the constraints in the introduction and show how this leads to a unique definition of a unique Risk Appetite Measure, \Abb, suitable for PFE.

\subsection{PFE behaviour under \Pbb}

\begin{figure}[htbp]
	\centering
		\includegraphics[width=1.2\textwidth,clip,trim=0 10 0 0]{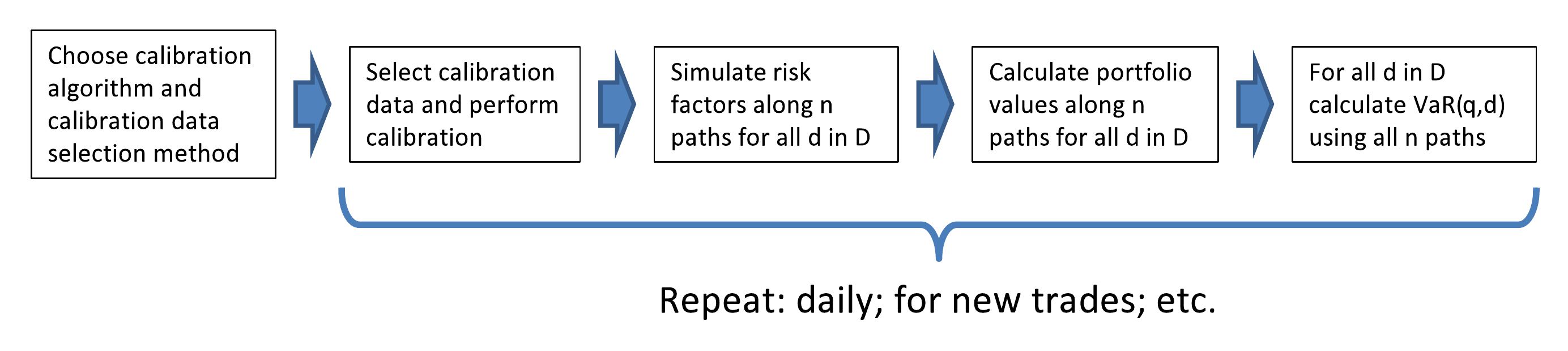}
		\caption{Flow chart for computation of PFE under one of many physical (aka historical) measures.  $d$ is a simulation stopping date in the set of all simulation stopping dates $D$.    Each separate physical measure corresponds to a different choice of calibration algorithm and selection method for calibration data.  We stress that there are many physical measures in practice because of the common, and erroneous,  practice of referring to ``the'' physical measure as though the true future probabilities were available for PFE computation.}
	\label{f:flow}
\end{figure}

Consider the flow chart in Figure \ref{f:flow} for computing PFE where $d$ is an element of the set of simulation stopping dates $D$ and there are $n$ paths.  The output of the procedure is a set of PFE numbers, PFE($q$,$d$), corresponding to the chosen exposure quantile $q$ for each stopping date $d$.

The \Pbb\ procedure of Figure \ref{f:flow} can be characterized by either of the following equivalent properties:
\begin{itemize}
	\item {\it Each} PFE($q$,$d$) computation is independent of {\it all} simulation paths up to $d$
	\item Only portfolio simulation values at $d$ are required for the computation of PFE($q$,$d$) 
\end{itemize}
We call this property of PFE behaviour under \Pbb\ the {\it Independence Property}.  From the point of view of physical-measure simuation this also appear to be common sense, only values at the time have any effect.

Now considering risk-neutral pricing we know from, for example FACT TWO of \cite{Brigo2006a} that the price of a payoff at $T$ seen from $t$
\be
	\text{Price of Payoff(T)}_t = \E_t^{\fbox{B}}\left[  \fbox{B(t) } \frac{\text{Payoff(T)}}{\fbox{B(T)}}    \right] 
	\ee
is invariant under change of numeraire from $B$ to any other valid numeraire.  Valid numeraires are non-zero, self-financing portfolios.  Hence the probability distribution of the payoff at $T$ {\it discounted back to $t$} is also invariant under any change of numeraire as we now prove.

\begin{lemma}
The discounted cumulative distribution function (CDF) of future portfolio value is measure invariant, as is the discounted probability distribution function (PDF) if it exists.
\end{lemma}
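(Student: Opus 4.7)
The plan is to reduce the claim about the discounted CDF/PDF to the measure invariance of prices already recalled from FACT TWO of Brigo. The crucial step is to recognise that the ``discounted CDF'' of a future portfolio value $V(T)$, evaluated at level $x$ and seen from $t$, is nothing other than the time-$t$ price of a digital payoff:
\be
	F_t(x) \;:=\; \E_t^{B}\!\left[\frac{B(t)}{B(T)}\,\mathbf{1}_{\{V(T)\leq x\}}\right].
\ee
First I would fix an arbitrary admissible numeraire $B$ with associated pricing measure $\Qbb^B$, and any other admissible numeraire $N$ with measure $\Qbb^N$, and verify that $\mathbf{1}_{\{V(T)\leq x\}}$ is $\F_T$-measurable and bounded, hence integrable under both measures. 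This makes FACT TWO directly applicable.

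Next I would apply the invariance of price under change of numeraire to the digital payoff $\mathbf{1}_{\{V(T)\leq x\}}$, obtaining
\be
	\E_t^{B}\!\left[\frac{B(t)}{B(T)}\,\mathbf{1}_{\{V(T)\leq x\}}\right] \;=\; \E_t^{N}\!\left[\frac{N(t)}{N(T)}\,\mathbf{1}_{\{V(T)\leq x\}}\right],
\ee
which is the desired measure invariance of the discounted CDF, valid for every $x\in\mathbb{R}$. Note that $F_t$ is not itself a probability distribution function (it is only a subprobability because the discount factor shrinks mass), which is precisely why the word ``discounted'' is essential; I would flag this explicitly in the write-up to avoid confusion with the usual risk-neutral CDF of $V(T)$, which is \emph{not} numeraire-invariant.

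For the PDF statement I would proceed by differentiation in $x$ under the expectation, which is justified whenever the conditional law of $V(T)$ admits a density under one (hence, by the first part, under any) of the pricing measures. Concretely, writing $f_t(x) := \partial_x F_t(x)$, interchanging derivative and expectation using dominated convergence gives the same identity with the indicator replaced by a Dirac-type kernel, and the equality of $f_t$ under $B$ and $N$ follows from the CDF equality by uniqueness of the derivative. The main obstacle is not conceptual but notational: the reader must see that $F_t$ is a \emph{price} of a family of digital securities indexed by $x$, at which point invariance is immediate from the already-cited result; I would therefore devote one short sentence to this reframing and then let the pricing invariance do the work.
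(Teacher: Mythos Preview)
Your proposal is correct and follows essentially the same approach as the paper: identify the discounted CDF as the time-$t$ price of the digital payoff $\mathbf{1}_{\{V(T)\le x\}}$, invoke the change-of-numeraire invariance of prices (FACT TWO), and then pass to the PDF by differentiation when a density exists. Your write-up is in fact more careful than the paper's (you check integrability, flag the subprobability issue, and justify the differentiation), but the underlying argument is the same.
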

\begin{proof}
Consider $\text{dcdf}(t,V)$ the discounted CDF of portfolio value $x(T)$, (i.e.\ gives probability of $x(T)$ being less than or equal to the constant $V$):
\be
\text{dcdf}(t,V) := \E_t^B\left[ B(t) \frac{I(V,x(T))}{B(T)} \right]  
\ee
Where

$x(T)$ is the portfolio value;

$I(V,x(T))$ is an indicator function that is 1 when $V \ge x(T)$ but zero otherwise.

Now $\E[ I(V,x(T)) ]$ is the CDF of $x(T)$, and the $B(t)/B(T)$ is the discounting.  Since dcdf$(t)$ is invariant under change of $B$ for any other numeraire, the discounted CDF is measure invariant.  Hence so is the discounted PDF (if it exists, as PDFs do not have to).
\end{proof}

This suggests a numeraire-independent definition of PFE as discounted PFE.  However, how can we compare \Pbb-measure PFE and discounted-PFE, i.e.\ how do we discount \Pbb-measure PFE back to $t$?  We can be almost sure that the expected value of a \Pbb-bank account will not be worth the inverse of a zero-coupon bond from $t$ to $T$.  (If we could be sure then the \Pbb-measure drift would be the same as the \Qbb-measure drift which cannot be guaranteed).  Thus we can see two methods to discount \Pbb-measure PFE and it is arguable which one to use as both investments are available at $t$ (bank account and zero coupon T-maturity bond).  Should we chose the investment that generates the highest return?  We know the calibration so this is available.  This uncertainty is similar to the uncertainty over the choice of numeraire for \Qbb-measure PFE (just apart from choices of calibration).

The measure-independent definition of risk-neutral PFE is {\it discounted PFE} and this cannot be compared to \Pbb-measure PFE because discounting \Pbb-measure PFE is ambiguous.  That is, \Pbb-measure PFE is naturally in the future whereas \Qbb-measure PFE is naturally in the present (and discounted).

This is where we need the {\it Independence Property} that {\it each} PFE($q$,$d$) computation is independent of {\it all} simulation paths up to $d$.  This selects a unique method to inverse-discount \Qbb-measure PFE from $t$ (now) to $T$ (future), that is  we must use the T-maturity zero coupon bond.  This construction of future \Qbb-measure PFE is measure independent\footnote{Some readers may object that this construction uses a value not at $T$, i.e.\ the start-time values of $T$-maturity zero coupon bonds.  However, these are in the same category as data used in the \Pbb\ construction for calibration: observed data that is not model or simulation dependent and only refers to time $T$.}.  This construction is also consistent with PFE computed using the $T$-Forward numeraire for each $T$.  Thus we have a \Qbb-measure PFE that behaves in the same way as \Pbb-measure PFE, i.e.\ has the {\it Independence Property}.

\subsection{Consistency of Bank Risk Appetite and Regulatory Constraints}

We develop the following points here:  
\begin{itemize}
	\item a bank has a documented risk appetite which defines prices of different risks;
	\item backtesting requirements on risk factor dynamics give limits on drifts (etc.), that is, limits on the bank prices of risks;
	\item we can calculate the bank price of risk from the risk appetite for each business unit using easily available information, i.e.\ its limits and its budget (required rate of return).
\end{itemize}
We suggest that:
\begin{itemize}
	\item the bank price of risk should be consistent with its regulatory requirements as expressed through backtesting limits on the price of risk;
\end{itemize}
and observe
\begin{itemize}
	\item there is no finite risk appetite that is consistent with a zero price of risk, i.e.\ with a risk-neutral-based PFE.
\end{itemize}
In the following section we will develop the Risk Appetite Measure, \Abb, for PFE that is consistent with the bank's risk appetite (as constrained by regulatory requirements).

\subsubsection{Bank Risk Appetite}

Banks take risks, for example lending money, for rewards, that is to make profits.  Banks are required by regulators to document their risk appetite with an effective Risk Appetite Framework (RAF), and  the RAF is a key element of increased supervision endorsed by G20 leaders \cite{FSB2013appetite}.  A Risk Appetite Statement (RAS) is a key part of supervision \cite{BCBS-328}.  The BCBS and the FSB use slightly different terminology but both see a risk appetite statement as vital.  Risk appetite is defined in \cite{BCBS-328} referencing \cite{FSB2013appetite} as:
\begin{quote}
The aggregate level and types of risk a bank is willing to assume, decided in advance and within its risk capacity, to achieve its strategic objectives and
business plan.
\end{quote}
Risk appetite is not a single figure, rather:
\begin{quote}
Risk appetite is generally expressed through both quantitative and qualitative means and should consider extreme conditions, events, and outcomes. In addition, risk appetite should reflect potential impact on earnings, capital, and funding/liquidity.

Senior Supervisors Group, Observations on Developments in Risk Appetite Frameworks and IT Infrastructure, December 23, 2010
\end{quote}
Principle three of the key elements of FSB Principles for effective RAFs is a statement of risk limits which are defined as 
\begin{quote}
Quantitative measures based on forward looking assumptions that allocate the financial institution’s aggregate risk appetite statement (e.g.\ measure of loss or negative events) to business lines, legal entities as relevant, specific risk categories, concentrations, and as appropriate, other levels. 
\end{quote}
Furthermore:
\begin{quote}
The elements of the RAF should be applied at the business line and legal entity levels in a manner that is proportionate to the size of the exposures, complexity and materiality of the risks.
\end{quote}
In practice every business unit has a PFE (or VaR or other) limit {\it and} a budget.  Budget is the word used for the profit over a fixed period, or rate or return on investment (or equity, etc.) that a business unit is required to deliver.  The business unit limits and the business unit budget define the risk appetite of the bank for the risk type that the business unit is engaged in.  These number are readily available within a bank and constantly monitored.  Every business unit head is acutely aware of their progress towards plan on budget and their limit usage.

Now we need to link the price of risk with the risk appetite.  First recall a definition of the {\it market} price of risk following \cite{bjork2004arbitrage}.  Suppose that we have a stock price process $S(t)$ under a physical measure:
\be
dS(t) / S(t) = \mu dt + \sigma dW
\ee
then the market price of risk $m_M$ is defined as:
\ben
m_M := \frac{\mu - r}{\sigma}	\label{e:por}
\een
where $r$ is the riskless rate.  The price of risk gives the additional return required by investors per unit of return volatility $\sigma$.  Of course the return can only be volatile when it is not hedged.  Were the return to be hedged, i.e.\ riskless, then the return is $r$, the riskless rate.
\begin{figure}[htbp]
	\centering
		\includegraphics[width=0.99\textwidth,clip,trim=0 0 0 0]{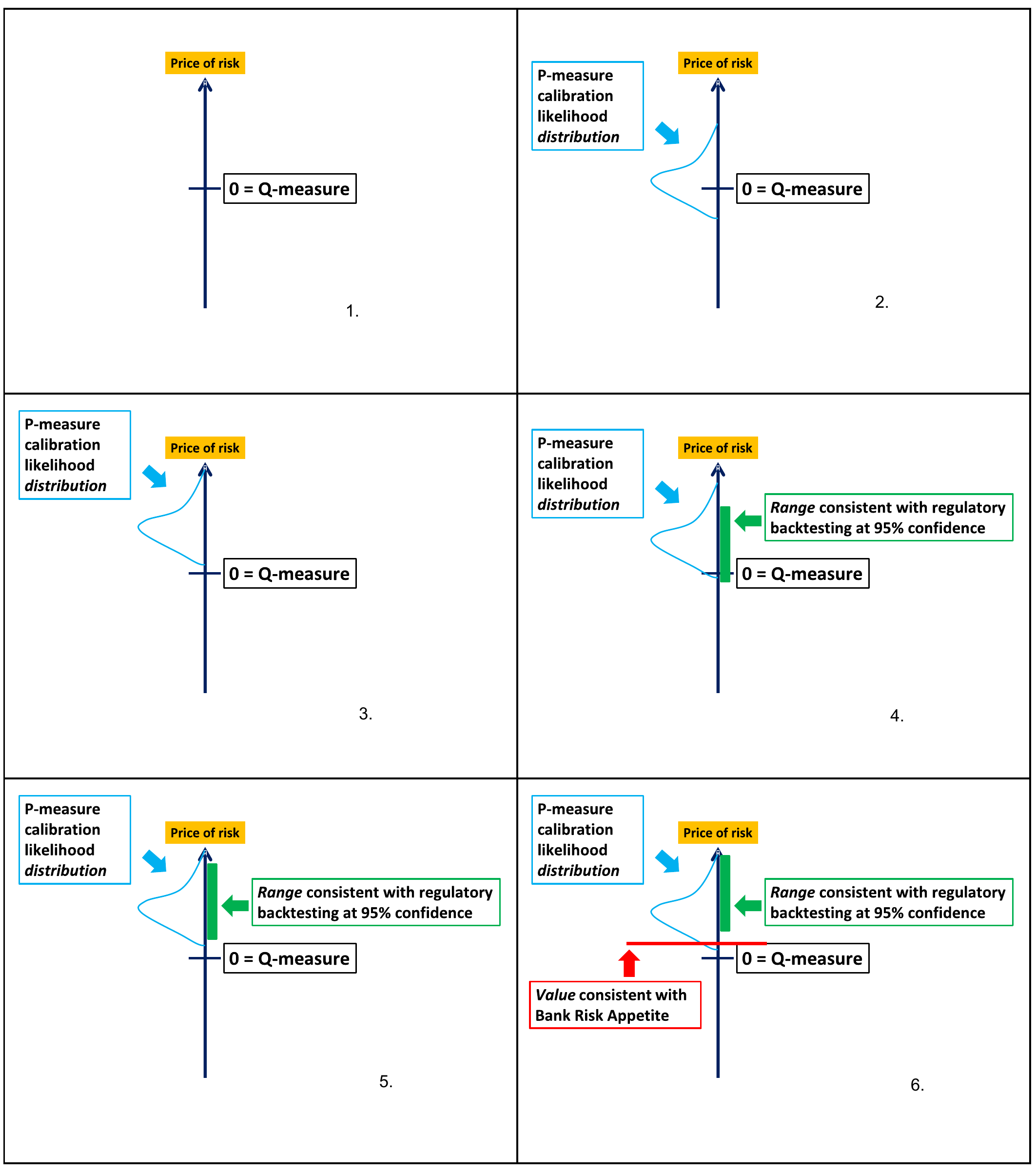}
		\caption{Interaction of \Pbb, \Qbb, and \Abb\ prices of risk.  See text for details.}
	\label{f:intuition}
\end{figure}
Before going into details we give the intuition between the interaction of the different measures and constraints in Figure \ref{f:intuition} as follows.
\begin{enumerate}
	\item (Top Left) shows an axis of possible prices of risk, with zero marked and labelled with the \Qbb\ measure because a risk-neutral measure has a zero price of risk by definition.
	\item (Top Right) adds a probability distribution of \Pbb-measure prices of risk.  A maximum-likelihood calibration will, by definition, pick the price of risk with the highest likelihood.  This particular calibration shows the support of the \Pbb-measure prices of risk including zero.
	\item (Middle Left) considers a later \Pbb-measure calibration where the probability distribution of \Pbb-measure prices of risk has shifted because the input data for the calibration has shifted.  In this example the support of the \Pbb-measure prices of risk no longer  includes zero.
	\item (Middle Right) adds a range of possible prices of risk consistent with regulatory backtesting.  There will be a most likely price of risk from the regulatory backtesting but there will also be an interval (generally, a set) of prices of risk that cannot be ruled out at any given confidence level (here given as 95\%).
	\item (Bottom Left) considers a later situation where the interval of prices of risk that cannot be ruled out with 95\%\ confidence no longer includes zero.
	\item (Bottom Right) adds the price of risk consistent with the Bank Risk Appetite measure \Abb\ (see below for details on computation).  
\end{enumerate}
Figure \ref{f:intuition} illustrates several potential issues with \Pbb, \Qbb, and \Abb\ prices of risk.  Firstly, there is no guarantee that a zero price of risk will be consistent with a \Pbb-measure price of risk, even if it was previously.  Secondly, even if a zero price of risk is consistent with regulatory backtesting over one period there is no guarantee that a zero price of risk will remain consistent as markets move.  Thirdly, if the Bank's Risk Appetite is chosen without consideration of the regulatory backtesting there is no guarantee that the Bank's Risk Appetite will be consistent with regulatory backtesting.  If this were to occur the Bank might wish to reconsider its risk appetite.

\subsubsection{Limits from Backtesting}

Basel III \cite{BCBS-189} and its European implementation \cite{CRD-IV-Regulation,CRD-IV-Directive} mandate sound backtesting for the internal model method (IMM) for credit exposure.  Principles for such backtesting are provided in \cite{BCBS-185} and expanded on in \cite{Kenyon2012a,Anfuso2014a}.  If a bank is not using internal models for credit exposure for regulatory purposes, i.e.\ does not have regulatory approval to do so, then the bank may or may not choose to follow such principles for other purposes such as PFE.  We would expect this to be part of limit maintenance generally.  However, in the reverse case where regulatory approval has been obtained, then the Use Test element \cite{BCBS-128}, paragraphs 49-54 states that the bank should be consistent in the data used for day-to-day counterparty credit risk management.  Although backtesting is only required by regulators for banks with IMM approval we expect similar methods to be part of limit maintenance generally.

Backtesting will select \Pbb\ calibrations that pass at a desired confidence level:  assuming such exist, if not the bank will not be permitted to use the IMM approach for regulatory purposes, but may still use it for other purposes.  Hence it is possible to adapt the backtesting machinery to provide a set, or interval, or simulation parameters that pass at a given confidence level as in standard statistical interval estimation, see Chapter 9 of \cite{casella2002statistical}.

We suggest that whatever the bank chooses as its price of risk, this should be consistent with historical backtesting.  The regulatory Use Test may in fact require this but there is no specific mention in any of the FAQs for Basel III.  %Under \cite{BCBS-325} for the updated CVA VaR/ES capital there are two options in the consultative document.  In one the Front Office exposures are used and in the other the IMM exposures are used.  This is still under discussion as a final document is not yet published.

\subsubsection{Computing the Bank Price of Risk}

Since a bank has a risk appetite it has {\it already} specified its prices of risk.  Here we demonstrate how to compute them from business unit limits and business unit budget (i.e.\ rate of return on investment).

A bank can have different prices of risk for different business units and for different types of risk.  There is a vast literature on market prices for different types of risk including early work by \cite{sharpe1964capital,stanton1997nonparametric,bondarenko2014variance}, here we deal with computing bank prices of risk for use in PFE. We use the same form as Equation \ref{e:por} to compute the Bank price of risk $m_B$, e.g.
\be
m_B = \frac{(\text{Business Unit Rate of Return on Investment}) - r}{\sigma_\Rightarrow}
\ee
where: $r$ is riskless rate; $\sigma_\Rightarrow$ is the implied volatility from the PFE (or VaR) limit and required rate of return for the business unit  (i.e.\ budget versus investment).  Economically profits can come from many sources, for example rent from a monopoly position, or risk taking.  Here we consider only that part of profit that comes from risk taking.

We will start by assuming different parametric forms for the distribution of the rate of return of the business unit.  After this will provide an extension that requires only an empirical (non-parametric) distribution of the rate of return of the business unit.

\paragraph{Normal relative returns}  Suppose the relative returns of the business unit have a Normal($\mu_{\text{required}},\sigma$) distribution. $\mu_{\text{required}}$ is the required rate of return of the business unit, and the business unit limit $L$ is expressed as a PFE($q$) in units of required relative rate of return.  Now we have two unknowns and two constraints so the unknowns are uniquely defined as:
\begin{align}
\sigma_{\Rightarrow} =& \frac{\mu_{\text{required}}(1-L)}{\sqrt{2}\ \text{erfc}^{-1}(2q)}   \label{e:vol} \\
m_B =& \frac{\mu_{\text{required}}-r}{\sigma_\Rightarrow}
\end{align}
$\text{erfc}^{-1}$ is the inverse of the cumulative error function.
\begin{figure}[htbp]
	\centering
		\includegraphics[width=0.70\textwidth,clip,trim=0 0 0 37]{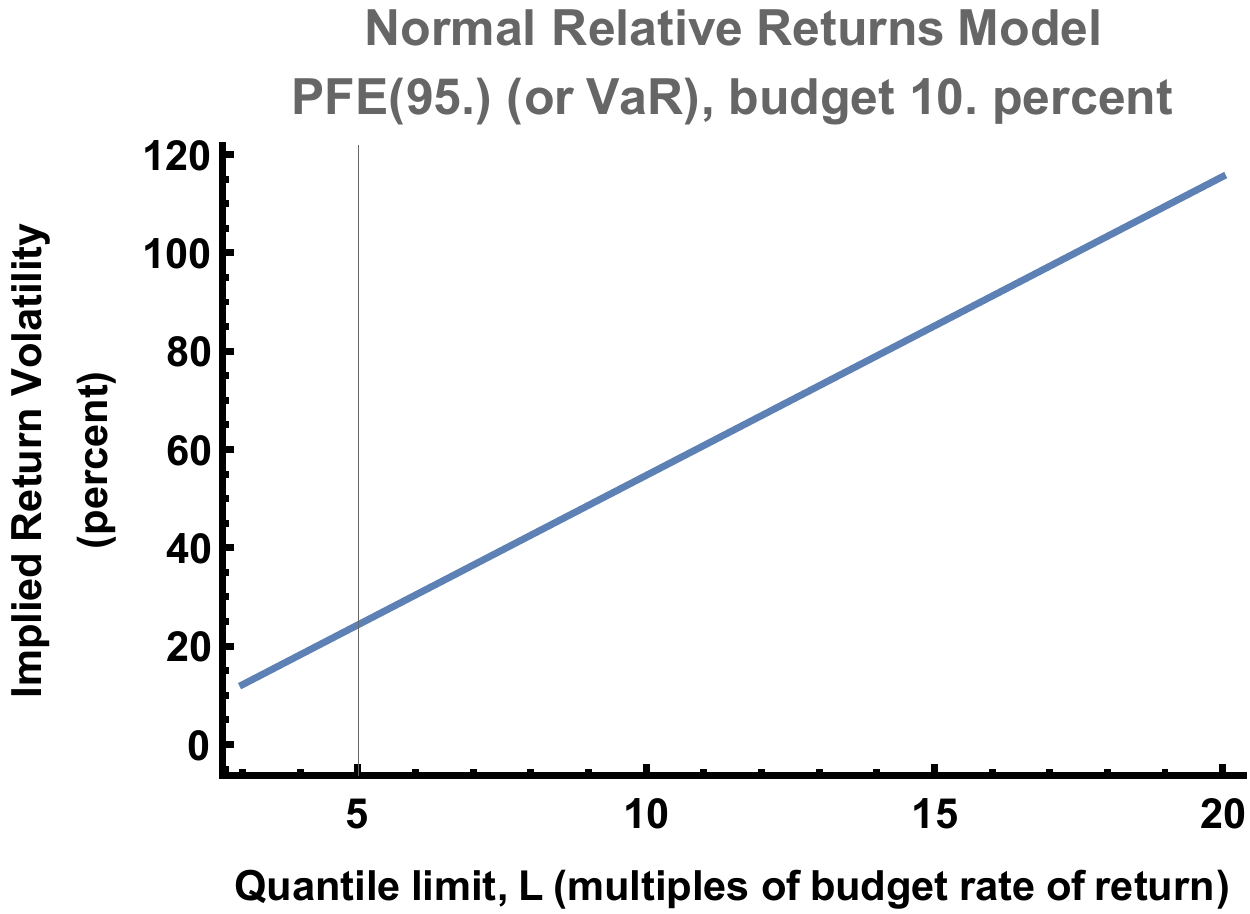}
		\caption{Implied volatility of required rate of return given budget (required rate or return) of 10\%\  and quantile (PFE or VaR) limit using 95\%.    See text for details.}
	\label{f:normal:vol}
\end{figure}
\begin{figure}[htbp]
	\centering
		\includegraphics[width=0.70\textwidth,clip=true,trim=0 0 0 37]{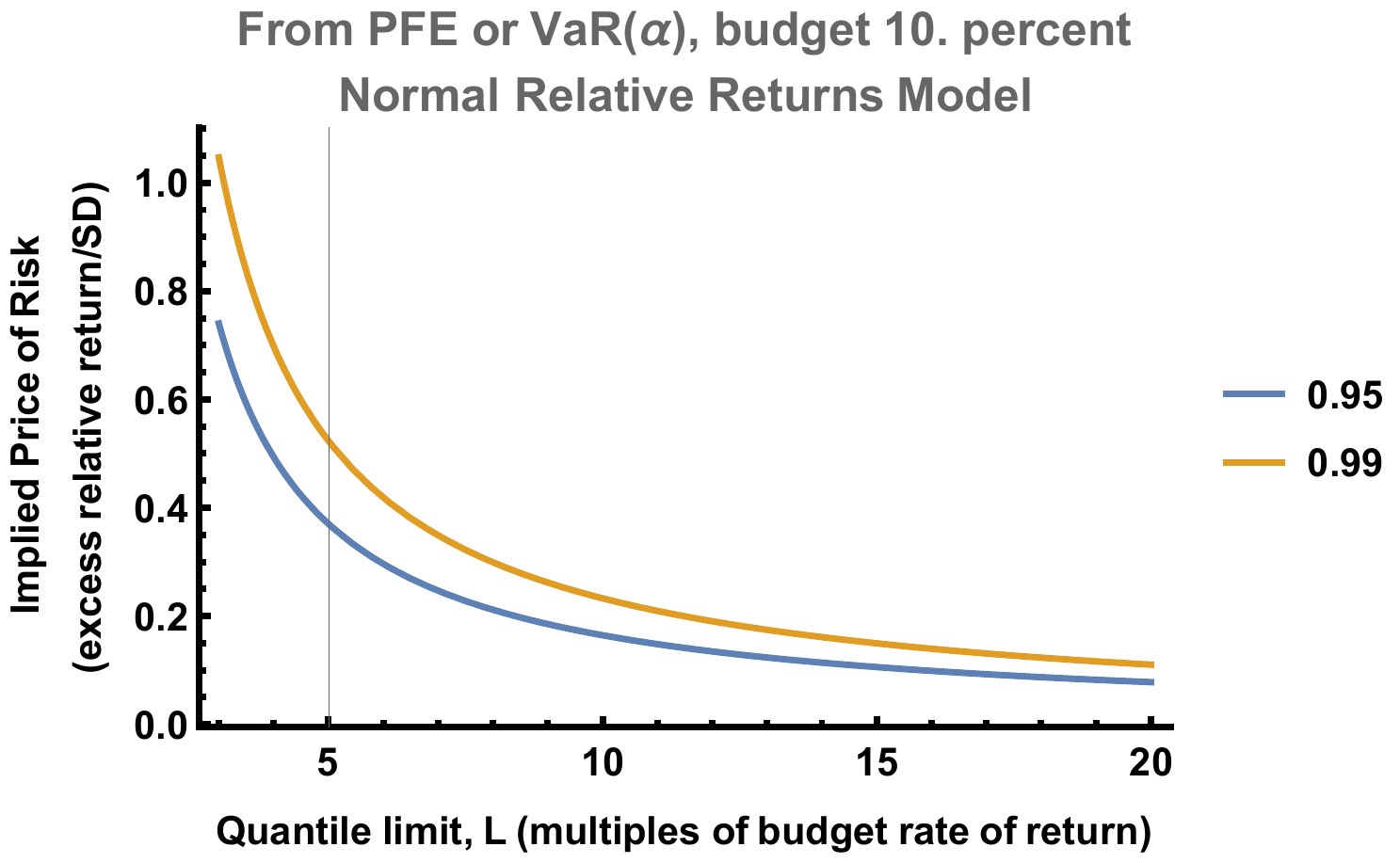}
		\caption{Implied bank price of risk given budget (required rate or return) of 10\%\  and quantile (PFE or VaR) limit using 95\% and 99\%.    See text for details.}
	\label{f:normal:por}
\end{figure}
Figure \ref{f:normal:vol} shows implied volatility of required rate of return given budget (required rate or return) of 10\%\  and PFE (or VaR) limit using 95\%.  As in Equation \ref{e:vol} this implied volatility is linear in the PFE (or VaR) limit expressed in terms of the business unit budget (not percentile).  Figure  \ref{f:normal:por} shows implied bank price of risk given budget (required rate or return) of 10\%\  and PFE (or VaR) limit using 95\% and 99\%.  As the PFE (or VaR) limits increase we see that the implied price of risk per extra unit of volatility decreases as we would expect.  That is, wider limits imply lower aversion to risk. 

\paragraph{Shifted-Lognormal relative returns} To assess the robustness of the prices of risk in Figure  \ref{f:normal:por} we now assume that the relative returns of the business unit have a shifted Lognormal distribution.  This provides a heavy-tailed distribution.  We pick $-1$ as the shift as this is essentially the worst case which we assume is the business unit losing its investment.  Since a shifted Lognormal distribution has three parameters and we have fixed one of them we can solve for the other two parameters.  This involves a quadratic equation so there are two possible solutions.  We chose the lower of the two as the more reasonable, i.e.\ given a low and a high implied price of risk it is more reasonable to conclude that the bank as the lower price of risk.  Of course, depending on circumstances (e.g.\ observations of limits on other business units with Normal returns) it could be reasonable to make the opposite conclusion.  Hence:
\begin{align}
\sigma_{\text{SLN}} =& -\sqrt{2} \text{erfc}^{-1}(2 \text{q}) \nonumber \\
&{}\pm \sqrt{  2 \text{erfc}^{-1}(2 \text{q})^2-2 (\log (  L \mu_{\text{required}} -\gamma ) - \log (\mu_{\text{required}}-\gamma) ) }  \\
\mu_{\text{SLN}}  =& \log (\mu_{\text{required}}-\gamma )-\frac{(\sigma_{\text{SLN}})^2}{2}\\
\sigma_{\Rightarrow}  =&  \sqrt{\left(e^{\sigma_{\text{SLN}} ^2}-1\right) e^{2 \mu_{\text{SLN}}+\sigma_{\text{SLN}}^2}}\\
m_B =& \frac{\mu_{\text{required}}-r}{\sigma_\Rightarrow}
\end{align}
We use $\gamma$ for the shift, giving the general result for a shifted Lognormal.
\begin{figure}[htbp]
	\centering
		\includegraphics[width=0.95\textwidth,clip,trim=0 0 0 37]{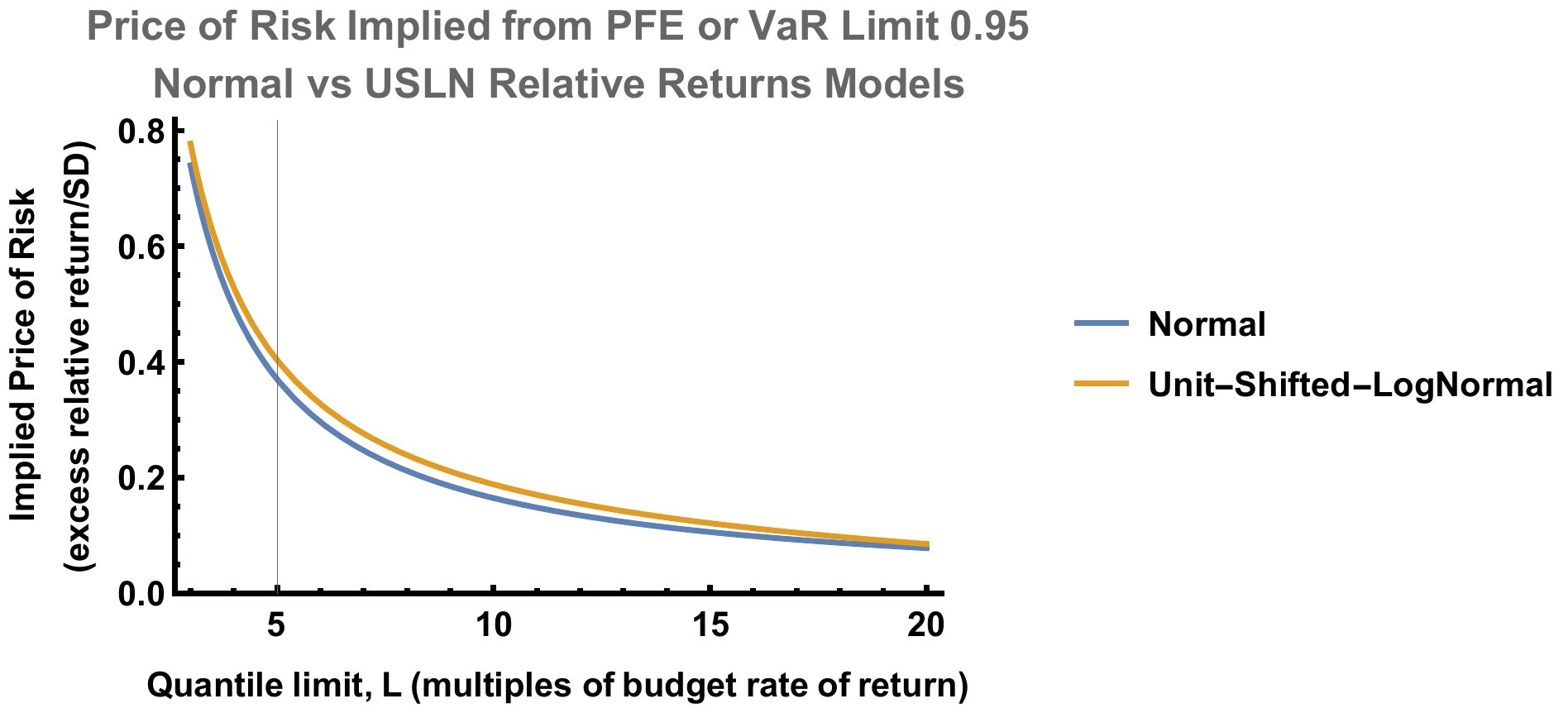}
		\caption{Comparison of implied bank price of risk using two different relative return distributions: Normal and Unit-Shifted-Lognormal}
	\label{f:robust}
\end{figure}
Figure \ref{f:robust} shows a comparison of implied bank price of risk using two different relative return distributions.  These have a maximum relative  difference of 15\% at $L=12.5$, thus we conclude that the implied bank price of risk is robust against changes in the return specification between Normal and Unit-Shifted-Lognormal.

It is clear from Figure \ref{f:robust}, and from simple observation, that no finite PFE (or VaR or ES) limit is compatible with a zero price of risk.  By definition a zero price of risk indicates indifference to anything except the expected return.  This implies that using any risk-neutral measure alone is incompatible with a finite risk appetite.

\paragraph{Empirical relative returns}  There are two cases here: empirical distribution fits the budget and PFE (or VaR) constraints; empirical distribution does not fit the constraints.  If the empirical distribution fits the constraints all that is required is to compute the standard deviation of the relative returns which is trivial.  If the empirical distribution does not fit then we shift and scale it so that it does fit as follows.

Let the {\it ordered}, from smallest to largest, relative returns be $x_i,\ i=1,\ldots,n$ and
\begin{align}
n_q =& \lceil n\times q \rceil  \\
\mu_{\text{required}} =&  g\sum_{i=1}^n ( x_i + h)\\
\text{PFE}(q) = L\mu_{\text{required}} =& g (x_{n_q} + h)
\end{align}
So again we have two equations in two unknowns and can solve for the shift $h$ and scale $g$. Hence
\begin{align}
h =& \frac{x_{n_q} - L\sum_{i=1}^n x_i}{L n  -1}\\
g =& \frac{L\mu}{x_{n_q}+h}
\end{align}
The problem is now reduced to the empirical distribution case which fits the constraints.  That is we simply calculate the empirical standard deviation for $\sigma_{\Rightarrow}$.  This can then be used to give the bank price of risk.

Whether or not it makes sense to scale and shift the empirical distribution will depend on the individual circumstances.  Here we provide the procedure in case it is reasonable.

\subsection{The Risk Appetite Measure, \Abb\ for PFE}

Just as a \Pbb-measure is defined as a change in the drift from a risk-neutral measure given by the market price of risk, we define the Risk Appetite Measure, \Abb, as a change in the drift from a risk-neutral measure given by the bank price of risk, i.e.
\be
\mu_\Abb := m_B \sigma + r
\ee
This is used in conjunction with the T-Forward measure to provide a unique Risk Appetite Measure for PFE.  Specifically we compute the discounted PFE (or VaR, or other limit), which is measure-independent, and then apply an inverse discount given by zero coupon bond prices {\it assuming that the rate of return is 
$\mu_\Abb$}.  This procedure explicitly obeys the {\it Independence Principle} described above and implicitly makes use of the T-Forward measure.  However, note that the procedure permits computation of the discounted PFE (or VaR) under {\it any} measure.  Thus there is no added complexity for computation.

\section{Discussion and Conclusions}

Here we propose using the bank's own risk appetite to set the measure for risk limits, e.g.\ PFE, provided that the bank's risk appetite is consistent with regulatory, or general limit maintenance, backtesting.  If the bank's risk appetite is not consistent with regulatory backtesting this suggests that the bank's risk appetite should be modified in order to pass the regulatory Use Test for day-to-day business decisions and satisfy the principles of a consistent Risk Appetite Framework \cite{BCBS-328,FSB2013appetite}.  We note that a zero price of risk is inconsistent with a finite risk appetite, and this suggests that measures for PFE cannot be risk neutral.

We call this measure defined by the bank's own risk appetite the Bank Risk Appetite Measure or \Abb.  We provide a measure-independent procedure for computing PFE under \Abb.  \Abb\ is defined by the bank price of risk and we give three methods for computing it based on business unit budgets and limits.  This data is readily available and business unit heads are acutely aware of their progress against plan on budget and limit usage.

This paper  adds to the literature by further linking investment valuation and evaluation  \cite{Markowitz1952a,sharpe1964capital} and mathematical finance by proposing that limits be computed under the bank's risk appetite measure provided as part of sound, and self-consistent, governance.

\section*{Acknowledgements}

The authors would like to acknowledge useful discussions with Chris Dennis and Manlio Trovato, and feedback from participants at RiskMinds International (Amsterdam, December 2015) where an earlier version of this work was presented.

\bibliographystyle{chicago}
\bibliography{kenyon_general}

\begin{thebibliography}{}

\bibitem[\protect\citeauthoryear{Acerbi and Sz\'ekely}{Acerbi and
  Sz\'ekely}{2014}]{Acerbi2014backtesting}
Acerbi, C. and B.~Sz\'ekely (2014).
\newblock {Back-testing expected shortfall}.
\newblock {\em {Risk}\/}~{\em 28}.

\bibitem[\protect\citeauthoryear{Anfuso, Karyampas, and Nawroth}{Anfuso
  et~al.}{2014}]{Anfuso2014a}
Anfuso, F., D.~Karyampas, and A.~Nawroth (2014).
\newblock {Credit exposure models backtesting for Basel III}.
\newblock {\em {Risk}\/}~{\em 26}, 82--87.

\bibitem[\protect\citeauthoryear{BCBS-128}{BCBS-128}{2006}]{BCBS-128}
BCBS-128 (2006, {June}).
\newblock {International Convergence of Capital Measurement and Capital
  Standards}.
\newblock {\em Basel Committee for Bank Supervision\/}.

\bibitem[\protect\citeauthoryear{BCBS-185}{BCBS-185}{2010}]{BCBS-185}
BCBS-185 (2010).
\newblock {Sound practices for backtesting counterparty credit risk models }.
\newblock {\em Basel Committee for Bank Supervision\/}.

\bibitem[\protect\citeauthoryear{BCBS-189}{BCBS-189}{2011}]{BCBS-189}
BCBS-189 (2011).
\newblock {Basel III: A global regulatory framework for more resilient banks
  and banking systems}.
\newblock {\em Basel Committee for Bank Supervision\/}.

\bibitem[\protect\citeauthoryear{BCBS-265}{BCBS-265}{2013}]{BCBS-265}
BCBS-265 (2013).
\newblock {Fundamental review of the trading book - second consultative
  document}.
\newblock {\em Basel Committee for Bank Supervision\/}.

\bibitem[\protect\citeauthoryear{{BCBS-325}}{{BCBS-325}}{2015}]{BCBS-325}
{BCBS-325} (2015).
\newblock {Review of the Credit Valuation Adjustment Risk Framework.
  Consultative Document}.
\newblock {\em Basel Committee for Bank Supervision\/}.

\bibitem[\protect\citeauthoryear{{BCBS-328}}{{BCBS-328}}{2015}]{BCBS-328}
{BCBS-328} (2015).
\newblock {Guidelines. Corporate governance principles for banks}.
\newblock {\em Basel Committee for Bank Supervision\/}.

\bibitem[\protect\citeauthoryear{Bj{\"o}rk}{Bj{\"o}rk}{2004}]{bjork2004arbitrage}
Bj{\"o}rk, T. (2004).
\newblock {\em Arbitrage theory in continuous time}.
\newblock Oxford university press.

\bibitem[\protect\citeauthoryear{Bondarenko}{Bondarenko}{2014}]{bondarenko2014variance}
Bondarenko, O. (2014).
\newblock Variance trading and market price of variance risk.
\newblock {\em Journal of Econometrics\/}~{\em 180\/}(1), 81--97.

\bibitem[\protect\citeauthoryear{Brigo and Mercurio}{Brigo and
  Mercurio}{2006}]{Brigo2006a}
Brigo, D. and F.~Mercurio (2006).
\newblock {\em Interest Rate Models: Theory and Practice, 2nd Edition}.
\newblock Springer.

\bibitem[\protect\citeauthoryear{Canabarro and Duffie}{Canabarro and
  Duffie}{2003}]{canabarro2003measuring}
Canabarro, E. and D.~Duffie (2003).
\newblock Measuring and marking counterparty risk.
\newblock In L.~Tilman (Ed.), {\em Asset/Liability Management for Financial
  Institutions}. Institutional Investor Books.

\bibitem[\protect\citeauthoryear{Casella and Berger}{Casella and
  Berger}{2002}]{casella2002statistical}
Casella, G. and R.~L. Berger (2002).
\newblock {\em Statistical inference}, Volume~2.
\newblock Duxbury Pacific Grove, CA.

\bibitem[\protect\citeauthoryear{{EU}}{{EU}}{2013a}]{CRD-IV-Directive}
{EU} (2013a).
\newblock {Directive 2013/36/EU of the European Parliament and of the Council
  of 26 June 2013 on access to the activity of credit institutions and the
  prudential supervision of credit institutions and investment firms, amending
  Directive 2002/87/EC and repealing Directives 2006/48/EC and 2006/49/EC Text
  with EEA relevance}.
\newblock {\em European Commission\/}.

\bibitem[\protect\citeauthoryear{{EU}}{{EU}}{2013b}]{CRD-IV-Regulation}
{EU} (2013b).
\newblock {Regulation (EU) No 575/2013 of the European Parliament and of the
  Council of 26 June 2013 on prudential requirements for credit institutions
  and investment firms and amending Regulation (EU) No 648/2012 Text with EEA
  relevance}.
\newblock {\em European Commission\/}.

\bibitem[\protect\citeauthoryear{{FSB}}{{FSB}}{2013}]{FSB2013appetite}
{FSB} (2013, November).
\newblock {\em {Principles for An Effective Risk Appetite Framework}}.
\newblock {Financial Stability Board}.
\newblock
  \url{http://www.financialstabilityboard.org/wp-content/uploads/r_131118.pdf}.

\bibitem[\protect\citeauthoryear{Green}{Green}{2015}]{Green2015xva}
Green, A. (2015).
\newblock {\em {XVA: Credit, Funding and Capital Valuation Adjustments}}.
\newblock Wiley.
\newblock November.

\bibitem[\protect\citeauthoryear{Green, Kenyon, and Dennis}{Green
  et~al.}{2014}]{Green2014kva}
Green, A., C.~Kenyon, and C.~R. Dennis (2014).
\newblock {KVA: Capital Valuation Adjustment by Replication}.
\newblock {\em {Risk}\/}~{\em 27\/}(12), 82--87.

\bibitem[\protect\citeauthoryear{Hull, Sokol, and White}{Hull
  et~al.}{2014}]{Hull2014joint}
Hull, J., A.~Sokol, and A.~White (2014, October).
\newblock {Short Rate Joint Measure Models}.
\newblock {\em {Risk}\/}, 59--63.

\bibitem[\protect\citeauthoryear{Kenyon and Green}{Kenyon and
  Green}{2015}]{Kenyon2015warehousing}
Kenyon, C. and A.~Green (2015).
\newblock {Warehousing credit risk: pricing, capital, and tax}.
\newblock {\em {Risk}\/}~{\em 28\/}(2), 70--75.

\bibitem[\protect\citeauthoryear{Kenyon and Stamm}{Kenyon and
  Stamm}{2012}]{Kenyon2012a}
Kenyon, C. and R.~Stamm (2012).
\newblock {\em Discounting, Libor, CVA and Funding: Interest Rate and Credit
  Pricing}.
\newblock Palgrave Macmillan.

\bibitem[\protect\citeauthoryear{Markowitz}{Markowitz}{1952}]{Markowitz1952a}
Markowitz, H. (1952).
\newblock {Portfolio Selection}.
\newblock {\em {Journal of Finance}\/}~{\em 7\/}(1), 77--91.

\bibitem[\protect\citeauthoryear{Pykhtin and Zhu}{Pykhtin and
  Zhu}{2007}]{Pykhtin2007modelling}
Pykhtin, M. and S.~Zhu (2007, July/August).
\newblock A guide to modelling counterparty credit risk.
\newblock {\em Global Association of Risk Professionals\/}~(37).

\bibitem[\protect\citeauthoryear{Sharpe}{Sharpe}{1964}]{sharpe1964capital}
Sharpe, W.~F. (1964).
\newblock Capital asset prices: A theory of market equilibrium under conditions
  of risk*.
\newblock {\em The journal of finance\/}~{\em 19\/}(3), 425--442.

\bibitem[\protect\citeauthoryear{Stanton}{Stanton}{1997}]{stanton1997nonparametric}
Stanton, R. (1997).
\newblock A nonparametric model of term structure dynamics and the market price
  of interest rate risk.
\newblock {\em Journal of Finance\/}, 1973--2002.

\bibitem[\protect\citeauthoryear{Stein}{Stein}{2013}]{Stein2013a}
Stein, H. (2013).
\newblock {Fixing Underexposed Snapshots -- Proper Computation of Credit
  Exposures Under the Real World and Risk Neutral Measures}.
\newblock {\em {Technical Report, Bloomberg LP}\/}~{\em December}, 1--23.

\end{thebibliography}

\end{document}